\newcommand{\keywords}[1]{\par\addvspace\baselineskip
\noindent\keywordname\enspace\ignorespaces#1}
\newcommand{\bea}{\begin{eqnarray}}
\newcommand{\eea}{\end{eqnarray}}
\newcommand{\be}{\begin{eqnarray*}}
\newcommand{\ee}{\end{eqnarray*}}
\newtheorem{algorithm}{Algorithm}
\begin{document}
\title{Representing Boolean Functions Using Polynomials: More Can Offer Less}
\titlerunning{Representing Boolean Functions Using Polynomials}

\author{Yi Ming Zou}
\institute{Department of Mathematical Sciences\\ University of Wisconsin-Milwaukee\\
Milwaukee, WI 53201, USA}
\toctitle{Lecture Notes in Computer Science}
\tocauthor{Authors' Instructions}
\maketitle

\begin{abstract}
Polynomial threshold gates are basic processing units of an artificial neural network. When the input vectors are binary vectors, these gates correspond to Boolean functions and can be analyzed via their polynomial representations. In practical applications, it is desirable to find a polynomial representation with the smallest number of terms possible, in order to use the least possible number of input lines to the unit under consideration. For this purpose, instead of an exact polynomial representation, usually the sign representation of  a Boolean function is considered. The non-uniqueness of the sign representation allows the possibility for using a smaller number of monomials by solving a minimization problem. This minimization problem is combinatorial in nature, and so far the best known deterministic algorithm claims the use of at most $0.75\times 2^n$ of the $2^n$ total possible monomials. In this paper, the basic methods of representing a Boolean function by polynomials are examined, and an alternative approach to this problem is proposed. It is shown that it is possible to use at most $0.5\times 2^n = 2^{n-1}$ monomials based on the $\{0, 1\}$ binary inputs by introducing extra variables, and at the same time keeping the degree upper bound at $n$. An algorithm for further reduction of the number of terms that used in a polynomial representation is provided. Examples show that in certain applications, the improvement achieved by the proposed method over the existing methods is significant.  
\keywords{artificial neural networks, Boolean neurons, Boolean functions, polynomial representations}  
\end{abstract}
\date{}

\section{Introduction}
\par
In this paper, we consider the problem of using a smaller number of monomial terms in the expression of a polynomial threshold neuron with binary inputs \cite{Hass95}. These neurons correspond to Boolean functions, and it is known that any Boolean function can be represented by a polynomial function. There are three basic methods for representing a Boolean function by a polynomial in $n$ variables. The first one is to consider a Boolean function as a function $\{0,1\}^n\longrightarrow \{0,1\}$, and in this case the representation is unique \cite{Rud74}. The second one is to consider a Boolean function as a function $\{-1,1\}^n\longrightarrow \{-1,1\}$, and in this case, the representation is also unique \cite{Sak93,Wang91} (see also the discussion later in this section). The third, called sign representation, which is not unique and allows flexibility in the choice of a polynomial representation, has the input set $\{-1,1\}^n$ and the output set given by all nonzero real numbers\footnote{In some literature, the sign representations are allowed to take $0$ as an output. It is easy to see that these two definitions are equivalent.}. Recall that a polynomial $p: \{-1,1\}^n\longrightarrow\mathbb{R}$ is said to be sign representing a Boolean function $f: \{-1,1\}^n\longrightarrow \{-1,1\}$ if $f = sign(p)$ for all vectors in $\{-1,1\}^n$. Since for the values $0$ and $1$ we have $x^2=x$, and for the values $\pm 1$ we have $x^2=1$, in all three cases, there are $2^n$ linearly independent monomials of the form 
\bea\label{e1}
x_{i_1}x_{i_2}\cdots x_{i_k},\;\; 1\le i_1< i_2< \cdots< i_k,\;\; 0\le k\le n,
\eea
such that any other polynomial is a linear combination of these $2^n$ monomials (when $k=0$, the corresponding monomial is $1$).
\par
Before examining these methods of polynomial representations of Boolean functions in some detail, we first consider the case $n=3$ as an example. Since the binary sequences of length $3$ are in one-to-one correspondence with the vertices of a three dimensional cube, the corresponding Boolean functions can also be specified visually using a cube. In Fig. 1, (a) shows the corresponding labeling of the vertices of a $3$-dimensional cube using vectors of $\{-1,+1\}^3$. When use the vectors in $\{0,1\}^3$, we can just change the $-1$'s to $0$'s. Fig. 1 (b) and (c) describe two Boolean functions.
\par
\begin{figure}[h]
\begin{center}
\includegraphics[width=3.4in, height=1.4in]{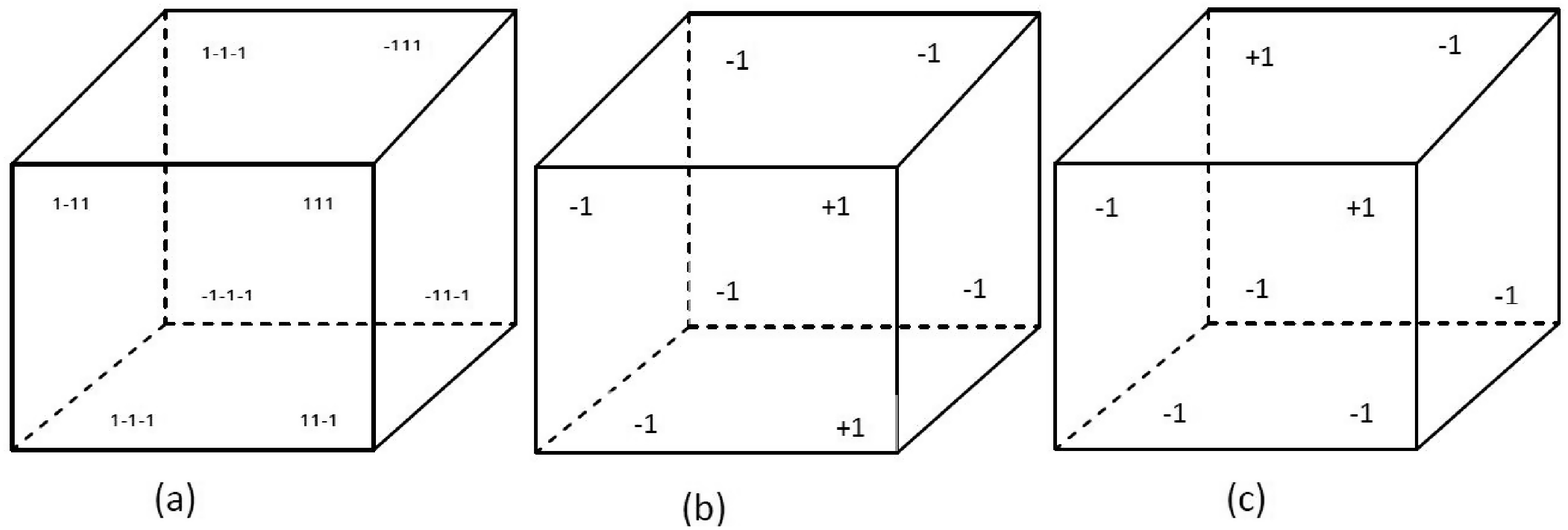} 
\caption{{\footnotesize Labeling of the cube and the descriptions of two Boolean functions. In (b) and (c), the values of the corresponding Boolean functions are specified by the numbers $+1$ or $-1$.}} \label{Fig1}
\end{center}
\end{figure}
\par
If the function given by Fig. 1 (b) is represented by a polynomial $\{-1,+1\}^3\longrightarrow \{-1,+1\}$, then the function is $f=\frac{1}{2}(-1+x+y+xy)$ (Fig. 2 (a)); if it is given by a sign representation, then we can take $f=x+y-1$ (Fig. 2 (b)); and if it is considered as a function  $\{0,1\}^3\longrightarrow \{0,1\}$, then it is given by $f=xy$ (Fig. 2 (c)). Thus for this function, viewing the function as a function $\{0,1\}^3\longrightarrow \{0,1\}$ offers the simplest representation, since it is easy to see that no single monomial can sign represent this function by checking each one of them.
\par
\begin{figure}[h]
\begin{center}
\includegraphics[width=3.4in, height=0.8in]{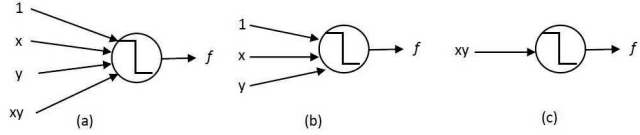} 
\caption{{\footnotesize The neurons for the function of Fig. 1 (b) under different settings. The weights of the input lines are omitted to simplify the diagrams.}} \label{Fig2}
\end{center}
\end{figure}
\par
Consider the function described by Fig. 1 (c). For a sign representation, we can take $f=-1+z+xy$; while viewing as a function $\{0,1\}^3\longrightarrow \{0,1\}$, the function is given by $f=z+xz+yz$. Thus in this case, the sign representation is simpler. This example shows that both the sign representations and the $\{0,1\}$ valued polynomial representations have advantages and disadvantages in applications.
\par
To further analyze the problem under consideration, we start with polynomial functions $\{-1,1\}^n\longrightarrow \{-1,1\}$. Since each of these polynomial functions can be expressed as a linear combination of the monomials described by (\ref{e1}), we can assign a $2^n\times 1$ column vector to each of these monomials by using the values of the given monomial on all the vectors of $\{-1,+1\}^n$. It is known that these column vectors form an orthogonal set, and if we order the monomials appropriately, then the columns actually form the Hadamard matrix $H_{2^n}$ \cite{Siu95}. 
\par
For instance, in case of $n=3$, if we order the monomials as 
\bea\label{f1}
1,z,y,yz,x,xz,xy,xyz
\eea
 and order the vectors of $\{-1,+1\}^3$ by{\footnotesize
\be
(+1+1+1), (+1+1-1),(+1-1+1),(+1-1-1),\\
(-1+1+1),(-1+1-1),(-1-1+1),(-1-1-1),
\ee}
 then the columns of the monomials in (\ref{f1}) form the Hadamard matrix $H_8$. 
\par
It is known that Hadamard matrices are symmetric and the columns of a Hadamard matrix are orthogonal, that is, the Hadamard matrix $H_{2^n}$ satisfies 
\be
H_{2^n}^T = H_{2^n},\quad H_{2^n}H_{2^n} = 2^nI_{2^n}, 
\ee
where $I_{2^n}$ is the identity matrix of size $2^n$. Thus for any $2^n\times 1$ column vector $\mathbf{b}$ with entries in $\{-1,+1\}$, the following system of linear equations
\bea\label{e2}
H_{2^n}\mathbf{x} = \mathbf{b}
\eea 
has a unique solution
\bea\label{e3}
\mathbf{x} = \frac{1}{2^n}H_{2^n}\mathbf{b}.
\eea
\par
This provides a simple method for finding the polynomial representation of a polynomial function $f: \{-1,1\}^n\longrightarrow \{-1,1\}$ given the values of the function. The advantage of using this polynomial representation is that it is rather simple to compute the representing polynomial because of the nice property of the Hadamard matrices. The disadvantage is that the representation is unique and the number of terms used in the representation cannot be reduced. 
\par
On the other hand, the sign representation allow more flexibility. For a sign representation, we can replace the vector $\mathbf{b}$ in equation (\ref{e3}) by any column vector $\mathbf{c}$ such that the signs of its entries match those of $\mathbf{b}$ (we say that $\mathbf{c}$ sign represents $\mathbf{b}$), that is, if $\mathbf{c}=(c_1,c_2,\ldots,c_{2^n})^T$ and $\mathbf{b}=(b_1,b_2,\ldots,b_{2^n})^T$, then $sign(c_i) = b_i,i=1,2,...,2^n$. If $\mathbf{c}$ sign represents $\mathbf{b}$, then the corresponding solution of (\ref{e3}) will provide a sign representation for the Boolean function $f$. Consider the system of equations
\bea\label{e4}
H_{2^n}\mathbf{x} = \mathbf{c}.
\eea
If we let $D_{\mathbf{b}}$ be the diagonal matrix with the diagonal entries given by the entries of $\mathbf{b}$, then $\mathbf{c}' :=D_{\mathbf{b}}\cdot\mathbf{c}$ has all positive entries. We write $\mathbf{c}'>\mathbf{0}$ if the entries of $\mathbf{c}'$ are all positive. 
\par
With these notation, we can formulate the problem of finding a polynomial sign representation with the smallest number of terms as the following optimization problem:
\bea\label{e5}
\mbox{Min}||\mathbf{x}||_{0} \quad\mbox{subject to}\quad D_{\mathbf{b}}H_{2^n}\mathbf{x} > \mathbf{0},
\eea
where the $\ell_0$ norm counts the nonzero entries of $\mathbf{x}$. 
\par
This minimization problem offers a rich theory and it is related to complexity problems \cite{Anth95, Bei94, Bru90, ODon08, Sak93}. Yet, in spite of all the attentions this problem has received \cite{Asp94, Boho02, EK01, Mat10, Nis94, Par02, SchM98}, there was no deterministic algorithm toward this minimization problem until the recent work \cite{Ozt09}, where a deterministic algorithm claims the use of at most $0.75\times 2^n$ monomials. However, as pointed out in \cite{Ozt09}, the algorithm proposed there is computationally costly, in particular, comparing with the simplicity of finding the exact solution via (\ref{e3}). 
\par
In contrary, due to the fact that when a Boolean function is considered as a polynomial function $\{0,1\}^n\longrightarrow \{0,1\}$ the expression is unique, relatively little attention has been paid to this case. We have seen from the examples we considered before that in some cases, these representations actually offer better solutions. Consider the case $n=3$ again. The polynomial function $\{0,1\}^3\longrightarrow \{0,1\}$ with the maximum number of terms is 
\bea\label{e6}
f=1+x+y+z+xy+xz+yz+xyz, 
\eea
which takes value $1$ on $(0,0,0)$ and $0$ for all other inputs. This polynomial function uses all $8$ possible terms and cannot be reduced. Since $\{0,1\}$ inputs and outputs are natural for Boolean functions, Boolean functions of the form $\{0,1\}^n\longrightarrow \{0,1\}$ are widely used in applications ranging form computer science to modeling biological systems. Thus we would like to ask the question of whether one can do better using this type of polynomial representations in neural networks. In the next section, we will explain how to represent the functions like the one in (\ref{e6})  by introducing extra variables and thus reducing the number of terms used while keeping the degree upper bound to be $n$.
\section{Main Result}
\par
Recall the polynomial function $f$ defined by (\ref{e6}). Observe that it can be factored as $f = (x+1)(y+1)(z+1)$.
Suppose we introduce three extra variables $u$, $v$, and $w$, such that $u=x+1$, $v=y+1$, and $w=z+1$, then we can have a polynomial representation of $f$ with only one term: $f = uvw$, which is as simple as one can get. To put our observation on rigorous mathematical ground, we start with recalling the formal definition of the Boolean polynomial algebra in $n$ variables.
\par
To simplify our writing, let $F_2 =\{0,1\}$. Since $F_2$ is the Galois field of $2$ elements, we can consider the ring of multivariate polynomials 
\bea\label{e8}
F_2[\mathbf{x}] := F_2[x_1,x_2,...,x_n],\;\;\mathbf{x}=(x_1,x_2,\ldots,x_n),
\eea
with coefficients in $F_2$. The polynomial ring $F_2[\mathbf{x}]$ is not what we use to represent polynomial binary functions since $x_i^2\ne x_i$ in this polynomial ring. To make $x_i^2=x_i$, we need to quotient out the ideal\footnote{An ideal of $F_2[\mathbf{x}]$ is a nonempty subset $I\subset F_2[\mathbf{x}]$ such that (1) $a+b\in I,\;\forall a, b \in I$; and (2) $pa\in I,\;\forall a\in I, p\in F_2[\mathbf{x}]$.} $I$ of $F_2[\mathbf{x}]$  generated by the binomials  (note that over the Galois field $F_2$, $x_i-x_i^2 = x_i+x_i^2$): $x_i+x_i^2,\;i=1,\ldots, n$.
Explicitly,
\be
I = \{\sum_{i=1}^ng_i(x_i+x_i^2)\;|\; g_i\in F_2[\mathbf{x}],\; 1\le i\le n\}.
\ee
The quotient $F_2[\mathbf{x}]/I$ is the Boolean algebra in which $x_i=x_i^2,\;1\le i\le n$. We denote this Boolean algebra by $B[\mathbf{x}]$, where $B=\{0,1\}$, to distinguish it from $F_2[\mathbf{x}]$. We have the following well-known fact \cite{Rud74}:
\begin{proposition}
Every polynomial in $B[\mathbf{x}]$ is a linear combination (with coefficients $0$ or $1$) of the $2^n$ monomials described in (\ref{e1}) and every function $f: F_2^n\longrightarrow F_2$ can be uniquely represented by such a polynomial.
\end{proposition}
\par
Following a similar construction, we start with a polynomial ring in $2n$ variables 
\bea\label{e9}
F_2[\mathbf{x},\mathbf{y}] := F_2[x_1,\ldots,x_n,y_1,\ldots, y_n],
\eea
and form the quotient by taking the ideal $J$ of $F_2[\mathbf{x},\mathbf{y}]$ generated by the following set of binomials and trinomials:
\bea\label{e10}
x_i+x_i^2,\;\; y_i +x_i+1,\;\; 1\le i\le n.
\eea
That is, we consider the Boolean algebra
\bea\label{e11}
B[\mathbf{x}/\mathbf{y}] = F_2[\mathbf{x},\mathbf{y}]/J,
\eea
where we have used the notation ``$\mathbf{x}/\mathbf{y}$'' instead of ``$\mathbf{x},\mathbf{y}$'' in order to distinguish the algebra constructed here from the usual Boolean algebra. We have the following theorem:
\begin{theorem}\label{t1} The Boolean algebras, $B[\mathbf{x}]$, $B[\mathbf{y}]$, and $B[\mathbf{x}/\mathbf{y}]$ are all isomorphic, i.e. 
\bea\label{e12}
B[\mathbf{x}]\cong B[\mathbf{y}] \cong B[\mathbf{x}/\mathbf{y}].
\eea
\end{theorem}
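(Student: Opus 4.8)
The plan is to prove the two isomorphisms in (\ref{e12}) separately, each by exhibiting an explicit pair of mutually inverse $F_2$-algebra homomorphisms; no dimension count is needed. The isomorphism $B[\mathbf{x}]\cong B[\mathbf{y}]$ is immediate: the assignment $x_i\mapsto y_i$ extends to a ring isomorphism $F_2[\mathbf{x}]\to F_2[\mathbf{y}]$ carrying the generators $x_i+x_i^2$ of $I$ onto the generators $y_i+y_i^2$ of the ideal defining $B[\mathbf{y}]$, hence descending to an isomorphism of the quotients. So the real content is $B[\mathbf{x}]\cong B[\mathbf{x}/\mathbf{y}]$, where I will concentrate my effort.

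The key observation is that the relations in $J$ force $y_i=x_i+1$ in $B[\mathbf{x}/\mathbf{y}]$, so the extra variables carry no genuinely new information. First I would define a homomorphism $\phi:F_2[\mathbf{x},\mathbf{y}]\to B[\mathbf{x}]$ on generators by $x_i\mapsto x_i$ and $y_i\mapsto x_i+1$, the images being taken in $B[\mathbf{x}]$. To see that $\phi$ descends to $B[\mathbf{x}/\mathbf{y}]=F_2[\mathbf{x},\mathbf{y}]/J$, I would check that it annihilates all the generators of $J$ listed in (\ref{e10}): indeed $\phi(x_i+x_i^2)=x_i+x_i^2=0$ in $B[\mathbf{x}]$, and $\phi(y_i+x_i+1)=(x_i+1)+x_i+1=0$ because the characteristic is $2$. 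This yields a homomorphism $\bar\phi:B[\mathbf{x}/\mathbf{y}]\to B[\mathbf{x}]$. Conversely, the assignment $x_i\mapsto x_i$ gives a homomorphism $F_2[\mathbf{x}]\to B[\mathbf{x}/\mathbf{y}]$ that kills each $x_i+x_i^2$ (such a binomial already lies in $J$), hence descends to $\psi:B[\mathbf{x}]\to B[\mathbf{x}/\mathbf{y}]$.

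Finally I would verify that $\bar\phi$ and $\psi$ are mutually inverse by checking their composites on generators. The composite $\bar\phi\circ\psi$ on $B[\mathbf{x}]$ sends $x_i\mapsto x_i\mapsto x_i$, so it is the identity. The composite $\psi\circ\bar\phi$ on $B[\mathbf{x}/\mathbf{y}]$ sends $x_i\mapsto x_i$ and $y_i\mapsto x_i+1$; but in $B[\mathbf{x}/\mathbf{y}]$ the relation $y_i+x_i+1=0$ says exactly that $x_i+1=y_i$, so this composite fixes both families of generators and is therefore the identity. Hence $B[\mathbf{x}]\cong B[\mathbf{x}/\mathbf{y}]$, and together with the relabeling isomorphism above this gives (\ref{e12}).

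I do not expect a genuine obstacle; the only point requiring care is the bookkeeping in characteristic $2$, ensuring both that $x_i+1$ is its own additive inverse and that $y_i^2=y_i$ holds automatically once $y_i=x_i+1$, so that the quotient $B[\mathbf{x}/\mathbf{y}]$ is again a Boolean algebra. If a more structural argument were desired, an alternative route would be to show that the set in (\ref{e10}) is a Gröbner basis of $J$ under an elimination order with the $y_i$ taken largest, whence the reduced monomials are precisely the squarefree monomials in the $x_i$ of (\ref{e1}); this exhibits the same $F_2$-basis and reconfirms the isomorphism. The explicit inverse maps above, however, are cleaner and make the correspondence $y_i\leftrightarrow x_i+1$ transparent.
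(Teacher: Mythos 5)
Your proof is correct, but it takes a genuinely different route from the paper's. The paper also starts from the substitution homomorphism $\phi : F_2[\mathbf{x},\mathbf{y}]\to B[\mathbf{x}]$, $x_i\mapsto x_i$, $y_i\mapsto x_i+1$, but then invokes the first isomorphism theorem and proves $\mbox{ker}(\phi)=J$ by a cardinality argument: since $y_i=x_i+1$ in $B[\mathbf{x}/\mathbf{y}]$, that algebra is spanned by the $2^n$ squarefree monomials in the $x_i$, so a strictly larger kernel would force $|F_2[\mathbf{x},\mathbf{y}]/\mbox{ker}(\phi)|\lneq |B[\mathbf{x}/\mathbf{y}]|\le |B[\mathbf{x}]|$, contradicting the isomorphism $B[\mathbf{x}]\cong F_2[\mathbf{x},\mathbf{y}]/\mbox{ker}(\phi)$. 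You instead bypass the identification of the kernel entirely by constructing the explicit inverse $\psi : B[\mathbf{x}]\to B[\mathbf{x}/\mathbf{y}]$ and checking both composites on generators, using the relation $y_i+x_i+1=0$ to see that $\psi\circ\bar\phi$ fixes the $y_i$. Your route is more self-contained: it needs no counting of elements, no finiteness, and no appeal to the spanning set of $B[\mathbf{x}/\mathbf{y}]$ (which the paper must establish as a separate observation inside the proof), and it makes the correspondence $y_i\leftrightarrow x_i+1$ completely explicit. What the paper's route buys is that the spanning-set observation it uses is needed anyway for the identification of functions and for Theorem 2, so the cardinality argument recycles it; your Gr\"obner-basis remark would recover that same basis statement as a corollary rather than as an ingredient. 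Both arguments are sound; yours is the cleaner proof of the isomorphism in isolation.
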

\begin{proof} Since $F_2[\mathbf{x},\mathbf{y}]=F_2[\mathbf{x}][\mathbf{y}]$ and $B[\mathbf{x}]$ is a quotient of $F_2[\mathbf{x}]$, by the substitution principle \cite{Art91}, there exists a ring homomorphism
\be
\phi : F_2[\mathbf{x},\mathbf{y}]\longrightarrow B[\mathbf{x}]
\ee
defined by 
\be
\phi : x_i\longrightarrow x_i,\; y_i\longrightarrow x_i+1,\quad 1\le i\le n.
\ee
 This homomorphism is clearly onto, so by the first isomorphism theorem \cite{Art91}, we have
\bea\label{e13}
B[\mathbf{x}]\cong F_2[\mathbf{x},\mathbf{y}]/\mbox{ker}(\phi), 
\eea
where
\be
\mbox{ker}(\phi) := \{p\in F_2[\mathbf{x},\mathbf{y}]\;|\; \phi(p)=0\}.
\ee
We need to show that $\mbox{ker}(\phi) = J$. It is clear that $x_i+x_i^2,\;y_i+x_i+1,\;i=1,\ldots, n$, are all in $\mbox{ker}(\phi)$, so $\mbox{ker}(\phi) \supseteq J$. To see that they are actually equal, we note that in $B[\mathbf{x}/\mathbf{y}] = F_2[\mathbf{x},\mathbf{y}]/J$, we have $y_i=x_i+1$, so every element can be expressed as a linear combination of the $2^n$ monomials in $x_i,\;i=1,\ldots,n,$ as described in (\ref{e1}). So if $\mbox{ker}(\phi) \supsetneq J$, then we would have the following relation on the cardinalities of the Boolean algebras
\be
|F_2[\mathbf{x},\mathbf{y}]/\mbox{ker}(\phi)|\lneq |B[\mathbf{x}/\mathbf{y}]|\le |B[\mathbf{x}]|,
\ee
which contradicts (\ref{e13}). We have just proved that $B[\mathbf{x}/\mathbf{y}]\cong B[\mathbf{x}]$, the proof for $B[\mathbf{x}/\mathbf{y}]\cong B[\mathbf{y}]$ is similar. Q.E.D.
\end{proof}
\par
Theorem \ref{t1} allows us to identify the Boolean algebra $B[\mathbf{x}]$ with the Boolean algebra $B[\mathbf{x}/\mathbf{y}]$. To identify the elements of these two Boolean algebras as functions, we embed $\{0,1\}^n$ into $\{0,1\}^{2n}$ as follows. For each vector $(a_1,a_2,\ldots,a_n)$, where $a_i\in\{0,1\}$, we identify it with $(a_1,a_2,\ldots,a_n, a_1+1,a_2+1,\ldots,a_n+1)$. Through this embedding, we can identify the elements of these two Boolean algebras as functions by assigning the values of the variables via 
{\footnotesize
\be
(x_1,x_2,\ldots,x_n,y_1,y_2,\ldots,y_n)=(a_1,a_2,\ldots,a_n, a_1+1,a_2+1,\ldots,a_n+1).
\ee}
For example, the function 
{\footnotesize
\be
f = x_2+x_1x_2+x_2x_3+x_1x_2x_3 = (x_1+1)x_2(x_3+1)
\ee}
can be identified with $f=y_1x_2y_3$.
\par
The following theorem is the main result.
\begin{theorem}\label{t2}
With the identification of  $B[\mathbf{x}]$ with $B[\mathbf{x}/\mathbf{y}]$ as described above, any Boolean functions $\{0,1\}^n\longrightarrow\{0,1\}$ can be represented by a polynomial of degree $\le n$ with at most $2^{n-1}$ terms.
\end{theorem}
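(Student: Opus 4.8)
The plan is to use the enlarged monomial supply provided by $B[\mathbf{x}/\mathbf{y}]$ to write $f$ as a canonical sum of point-indicator monomials (a minterm expansion), and then to balance the number of such terms against the number needed for the complementary function $1+f$. The whole argument is carried out inside $B[\mathbf{x}/\mathbf{y}]$, where $y_i$ is identified with $x_i+1$.

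First I would attach to each point $a=(a_1,\dots,a_n)\in\{0,1\}^n$ the monomial
$$
m_a=\prod_{i=1}^n \ell_i^{(a)},\qquad
\ell_i^{(a)}=\begin{cases} x_i,& a_i=1,\\ y_i,& a_i=0.\end{cases}
$$
Evaluating with $x_i(b)=b_i$ and $y_i(b)=b_i+1$ over $F_2$, I would verify that each factor $\ell_i^{(a)}$ takes the value $1$ precisely when $b_i=a_i$, so that $m_a(b)=1$ if $b=a$ and $m_a(b)=0$ otherwise; thus $m_a$ is the indicator of the single input $a$. Because $m_a$ is a product of one variable per index, it has degree exactly $n$.

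Next, for a given $f:\{0,1\}^n\to\{0,1\}$ I would form $f=\sum_{a:\,f(a)=1}m_a$, an $F_2$-sum whose number of terms equals $|f^{-1}(1)|$ and whose degree is at most $n$. If $|f^{-1}(1)|\le 2^{n-1}$ this already satisfies the claim. Otherwise $|f^{-1}(1)|\ge 2^{n-1}+1$, and here the one idea beyond the plain expansion is needed: the complement $\bar f=1+f$ equals $1$ exactly on $f^{-1}(0)$, so its minterm expansion has $|f^{-1}(0)|=2^n-|f^{-1}(1)|\le 2^{n-1}-1$ terms. Adding back the constant monomial $1$ recovers $f=1+\bar f$ using at most $(2^{n-1}-1)+1=2^{n-1}$ terms, and since the extra term $1$ has degree $0$, the bound on the degree is unaffected.

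I do not anticipate a serious obstacle here; the only step demanding attention is the count in the majority case. The bound is tight precisely because the inequality $|f^{-1}(0)|\le 2^{n-1}-1$ is strict enough to leave exactly one slot for the added constant $1$, which keeps the total at $2^{n-1}$ rather than $2^{n-1}+1$. I would also double-check the degenerate cases $f\equiv 0$ and $f\equiv 1$, where the expansions reduce to the empty sum and to the single term $1$ respectively, both comfortably within the bound for $n\ge 1$.
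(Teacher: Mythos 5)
Your proposal is correct and follows essentially the same route as the paper: the same point-indicator monomials (the paper's $p_{\mathbf{a}}$, built from $x_i$ or $y_i$ according to the bits of $\mathbf{a}$), the same minterm expansion over $f^{-1}(1)$, and the same switch to $f=1+\sum_{\mathbf{x}\in S_0}p_{\mathbf{x}}$ when $|f^{-1}(1)|>2^{n-1}$. Your explicit check that $|f^{-1}(0)|\le 2^{n-1}-1$ leaves room for the added constant is a welcome bit of care the paper glosses over, but the argument is the same.
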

\begin{proof}
For each vector $\mathbf{a}=(a_1,a_2,\ldots,a_n)\in\{0,1\}^n$, let $p_{\mathbf{a}}:=z_1z_2\cdots z_n$, where $z_i = x_i$ if $a_i = 1$, and $z_i=y_i$ if $a_i=0$. Then $p_{\mathbf{a}}(\mathbf{x}) = 1$ if $\mathbf{x}=\mathbf{a}$, and $0$ otherwise. For any function $f: \{0,1\}^n\longrightarrow\{0,1\}$, we have $f = \sum_{\mathbf{x}}f(\mathbf{x})p_{\mathbf{x}}$ (this is known as the {\it disjunctive normal form} of $f$). Furthermore, we have
\bea\label{e14}
\sum_{\mathbf{a}\in\{0,1\}^n}p_{\mathbf{a}} = 1.
\eea
Given $f$, we separate $\{0,1\}^n$ into two disjoint subsets
{\footnotesize
\be
S_0 := \{\mathbf{x}\in\{0,1\}^n\;|\;f(\mathbf{x})=0\}\;\mbox{and}\; S_1 := \{\mathbf{x}\in\{0,1\}^n\;|\;f(\mathbf{x})=1\}.
\ee}
Then by (\ref{e14}), we have 
\bea\label{e15}
f = \sum_{\mathbf{x}\in S_1}p_{\mathbf{x}} = 1+ \sum_{\mathbf{x}\in S_0}p_{\mathbf{x}}.
\eea
Now if $|S_1|\le 2^{n-1}$, then we use the first expression; and if $|S_1|>2^{n-1}$, then we can take the second expression. Q.E.D.
\end{proof}
\par
Though the above theorem guarantees the use of at most $2^{n-1}$ terms of degree $\le n$ monomials in representing a Boolean function, the expressions given by (\ref{e15}) are not necessary satisfactory. In applications, one can perform certain simplification process to simplify the expressions. Here we describe a simple procedure for reducing the number of monomial terms by combining terms using the relation $x_i+y_i=1,1\le i\le n$.
\par\vspace{1cm}
\begin{algorithm}\label{a1} Boolean Polynomial Function Reduction Algorithm
\par
INPUT: $f : \{0,1\}^n\longrightarrow\{0,1\}$.
\par
OUTPUT: A polynomial expression for $f$ of degree $\le n$ with at most $2^{n-1}$ terms of monomials in the expression.
\par
1. Choose a polynomial expression for $f$ according to Theorem \ref{t2}.
\par
2. Simplify the polynomial obtained in step 1 by combining pairs of terms differ only in one place using the relation $x_i+y_i=1$. This step can be repeated.
\par
3. Substitute in $x_i+1$ for $y_i$ in the polynomial obtained in step 2 and simplify module $2$. Then applying step 2 to combine terms if needed. 
\par
4. Compare the polynomials obtained in step 2 and step 3, and chose the one desired.
\end{algorithm}
\par
\begin{example}\label{ex1} Let $f: \{0,1\}^4\longrightarrow \{0,1\}$ be the function with output $1$ if the input sequence corresponds to an integer which is a sum of two squares ($0$ is allowed), and $0$ otherwise. Then $f$ outputs $1$ at the binary inputs correspond to the integers $0, 1,2,4,5,8,9,10,13$ (these integers are sums of two squares, i.e., $0=0^2+0^2$, $1=0^2+1^1$, $2=1^2+1^2$ etc.) and outputs $0$ for the binary inputs correspond to the integers $3,6,7,11,12,14,15$. The polynomial chosen in step 1 of Algorithm \ref{a1} is
{\footnotesize
\be
f  &=& 1+y_1y_2x_3x_4+y_1x_2x_3y_4+y_1x_2x_3x_4+x_1y_2x_3x_4\\
  {} &{}& \quad +x_1x_2y_3y_4+x_1x_2x_3y_4+x_1x_2x_3x_4.
\ee}
In step 2, we combine the second term with the fourth term $y_1y_2x_3x_4+y_1x_2x_3x_4=y_1x_3x_4$, the third with the seventh $y_1x_2x_3y_4+x_1x_2x_3y_4=x_2x_3y_4$, and the fifth with the eighth $x_1y_2x_3x_4+x_1x_2x_3x_4=x_1x_3x_4$, to get
\be
f = 1+y_1x_3x_4+x_2x_3y_4+x_1x_3x_4+x_1x_2y_3y_4.
\ee
It can be further simplified by combining the second and the fourth terms:
\be
f = 1+x_3x_4+x_2x_3y_4+x_1x_2y_3y_4.
\ee
\end{example}
\section{A Comparison Example}
\par
In this section, we compare our method with the result in \cite{Ozt09} using the example therein on restricted prime functions. The restricted prime functions $p_4$ and $p_5$ ($p_n$ counts the primes in $\{0, 1,\ldots, 2^n-1\}$) are explicitly given  in \cite{Ozt09} using sign representation polynomials with $7$ and $17$ terms respectively. Using our algorithm, we can represent these two functions with $4$ and $6$ terms respectively (the correctness of these polynomial representations can be checked by substitute in the binary sequences for the integers):
{\footnotesize
\begin{gather*}
p_4 =  x_1x_2x_4+x_1x_3x_4+y_1x_2x_4+y_1y_2x_3,\\
p_5 = y_1y_2y_3x_4+y_1y_2x_3x_5+x_2x_3y_4x_5+x_1y_2y_3x_5+x_1x_3x_4x_5+y_1x_2y_3x_4x_5.
\end{gather*}}
We give one more formula for these restricted prime functions and report the other comparison detail in Table 1. The function $p_6$ requires $39$ terms using sign representation in \cite{Ozt09}, using our method, it can be represented by the following polynomial with $11$ terms:
{\footnotesize
\be
p_6  &=&  y_1y_2y_3y_4x_5+y_1y_2y_3x_4x_6+y_2x_3y_4x_5x_6+y_1x_3x_4y_5x_6\\
{}&{}&\;+y_1x_2y_3y_4x_6+y_1x_2x_4x_5x_6+ x_1y_3x_4y_5x_6+ x_1y_2x_3x_5x_6\\
{}&{}&\qquad +x_1x_3y_4x_5x_6+ x_1y_2x_3y_4y_5x_6+x_1x_2x_3x_4y_5x_6.
\ee}
\par\vspace{-0.3cm}
In Table 1, the third column reports the results of running Algorithm \ref{a1} once for each case. In some cases, as seen in Example \ref{ex1}, it is possible to simplify the representations further by repeating step 2 and/or step 3 in the algorithm, but we decided to leave them as they are for a uniform presentation. All computations were done on a Dell laptop with a core due processor of $3.06$ GHz and $3.5$ GB RAM using MAPLE 11. 
\par\vspace{-0.3cm}
\begin{table}[h]
  \begin{center}
  {\renewcommand{\arraystretch}{1.4}\small 
  \begin{tabular}{|c|l|l|}
    \hline
    Restricted prime & \# Monomials used  & \# Monomials used by\\
    Functions & in \cite{Ozt09} (\% of full set) & our method (\% of full set) \\\hline
    $p_7$ & $82\; (64.06\%)$ & $23\; (17.97\%)$  \\
    $p_8$ & 147\; (57.42\%) & $38\; (14.84\%)$ \\ 
    $p_9$ & $315\; (61.52\%)$ & $66\; (12.89\%)$ \\ 
    $p_{10}$ & $633\; (61.82\%)$ & $115\; (11.23\%)$ \\
    $p_{11}$ & $1259\; (61.47\%)$ & $202\; (9.86\%)$ \\
    $p_{12}$ & (Not reported)      & $366\; (8.93\%)$ \\\hline 
 \end{tabular}
  }
  \vskip 0.3cm
  \caption[Legend of variable names.]{\footnotesize Comparison table. The computing time for $p_{12}$ is $37.34$ sec.}
    \end{center}
\end{table}  
\par\vspace{-0.8cm}
\section{Concluding Remarks}
\par
 We have proposed a novel method for representing a Boolean function using a polynomial. Our basic idea is to introduce extra variables to gain more flexibility in the representations. In Theorem \ref{t1}, we showed that the Boolean algebra $B[\mathbf{x}/\mathbf{y}]$ we have constructed is isomorphic to the classical Boolean algebra $B[\mathbf{x}]$,  so we can identify Boolean functions $\{0,1\}^n\longrightarrow\{0,1\}$ with polynomials of $B[\mathbf{x}/\mathbf{y}]$. In contrast to the uniqueness of the polynomials in $B[\mathbf{x}]$, we can represent a polynomial in $B[\mathbf{x}/\mathbf{y}]$ in different ways and thus make it possible to reduce the terms used. Our main theorem, Theorem \ref{t2} states that each Boolean function can be represented by a polynomial in $B[\mathbf{x}/\mathbf{y}]$ with at most $2^{n-1}$ monomial terms of degree at most $n$. In practice, one can apply Algorithm \ref{a1}, which is straight forward and easy to implement, to further reduce the number of terms used and thus reducing the number of input lines needed for the corresponding neuron. 
\par
 We compared our method with the best known deterministic algorithm for representing a Boolean function using a sign representation polynomial reported in \cite{Ozt09}. Table 1 shows that when applies to the restricted prime functions, our method produces significant improvements. However, as explained in the introduction, in general, each of the approaches examined here (including our proposed approach) has its own advantage, so in applications, one needs to choose the method which is appropriate for the type of applications under consideration, or to use a combined approach to achieve the best result. 
\par
Future work includes further theoretical analysis of the proposed new approach and its relationship with the existing approaches, in particular its relationship with sign representations.  


\end{document}